\newtheorem{lemma}{Lemma}[section]
\CheckCommand*{\@cite}[2]{%
  {%
    \@citestyle[\citeform{#1}\if@tempswa, #2\fi]%
  }%
}
\renewcommand*{\@cite}[2]{%
  {%
    \@citestyle\citeform{#1}\if@tempswa, #2\fi
  }%
}
\author{Elvis Barakovic}
\address{
Department of Mathematics \newline \indent
Faculty of Science and Mathematics \newline \indent
University of Tuzla  \newline \indent
Bosnia and Herzegovina}
\email{elvis.barakovic@untz.ba}
\urladdr{http://pmf.untz.ba/staff/elvis.barakovic/}
\author{Vedad Pasic}
\address{
Department of Mathematics \newline \indent
Faculty of Science and Mathematics \newline \indent
University of Tuzla  \newline \indent
Bosnia and Herzegovina}
\email{vedad.pasic@untz.ba}
\urladdr{http://pmf.untz.ba/vedad/}
\title{Physical interpretation of pp-waves with axial torsion}
\keywords{Metric-affine gravity; Einstein--Weyl theory; PP-waves; Torsion waves}
\begin{document}
\begin{abstract}
We consider generalised pp-waves with purely axial torsion, which we previously showed to be new vacuum solutions of quadratic metric-affine gravity. Our analysis shows that classical pp-waves of parallel Ricci curvature should not be viewed on their own. They are a particular representation of a wider class of solutions, namely generalised pp-waves of parallel Ricci curvature. We compare our pp-waves with purely axial torsion to solutions of Einstein-Weyl theory, the classical model describing the interaction of gravitational and massless neutrino fields.
\end{abstract}
\maketitle

\section{Introduction}\label{introduction}
Spacetime is considered to be a connected real 4-manifold $M$ equipped with a Lorentzian metric $g$
and an affine connection $\Gamma$.
This approach, where the connection is viewed independently from the metric is  called \emph{metric-affine gravity}.
In \emph{quadratic} metric-affine gravity, we define the action as
$ \displaystyle
S:=\int q(R)
$
where $q$ is an $O(1, 3)$--invariant quadratic form on curvature $R$. Independently varying the action with respect to the metric $g$ and the connection $\Gamma$ produces the system of Euler--Lagrange equations which we will write symbolically as
\begin{eqnarray}
\label{eulerlagrangemetric}
\partial S/\partial g&=&0\\
\label{eulerlagrangeconnection}
\partial S/\partial \Gamma&=&0.
\end{eqnarray}
We consider a pp-wave as a Riemannian spacetime which admits a parallel spinor field. Classical pp-waves of parallel Ricci curvature were shown to be solutions of \eqref{eulerlagrangemetric}, \eqref{eulerlagrangeconnection} by Vassiliev.\textsuperscript{\cite{vassiliev2002pseudoinstantons,vassiliev2004quadratic}}
In our previous paper  Ref.~\cite{pasic2015torsion}, we introduced  generalised
pp-waves with purely axial torsion  as metric compatible spacetimes with pp-metric and torsion $\label{define torsion} T :=  *A$, where $A$ is a real vector field defined by $A = k(\varphi) l$, where  $l$ is a real parallel null lightlike vector and $k:\mathbb{R}\mapsto \mathbb{R}$ is an arbitrary real function of the phase $\varphi:M\mapsto\mathbb{R},\  \varphi(x) := \int l\cdot dx$. If we were to write down the pp-metric locally as
\begin{equation}\label{explicit pp metric}
\,d s^2= \,2\,d x^0\,d x^3-(d x^1)^2-(d x^2)^2 +f(x^1,x^2,x^3)\,(d x^3)^2,
\end{equation}
in some local coordinates $(x^0,x^1,x^2,x^3)$,  and for
\begin{equation}\label{explicit l and a}
l^\mu=(1,0,0,0),\ m^\mu=(0,1,\mp \mathrm{i},0)
\end{equation}
we get that $\varphi(x)=x^3+\mathrm{const}$. The torsion $T$ is purely axial and the connection of a generalised pp-wave with purely axial torsion is metric compatible.
We have shown that generalised pp-waves with purely axial torsion of parallel $\{\!Ric\!\}$ are
solutions of \eqref{eulerlagrangemetric}, \eqref{eulerlagrangeconnection} in the Yang--Mills case.  The remarkable property  is that the curvature of a generalised pp-wave is a
sum of the curvature of the underlying classical pp-space
\begin{equation}
\label{curvature of the underlying classical pp-space}
-\frac12(l\wedge\{\!\nabla\!\})\otimes(l\wedge\{\!\nabla\!\})f
\end{equation}
 and the curvature
\begin{equation}
\label{curvature generated by a torsion wave} \frac{1}{4}k(\varphi)^2\mathrm{Re} \left((l\wedge m)\otimes(l\wedge \overline{m})\right)
\mp\frac{1}{2}k'(\varphi)\mathrm{Im} \left((l\wedge m)\otimes(l\wedge \overline{m})\right)
\end{equation}
generated by a axial torsion wave traveling over the pp-space.
Ricci curvature is
\begin{equation}\label{RicEW}
Ric = \frac12\left(f_{11}+f_{22}-k^2\right) (l\otimes l),
\end{equation}
where $f_{\alpha\beta} = \partial_{\alpha}\partial_{\beta} f$ and scalar curvature $\mathcal{R}$ is equal to zero. Similarly, the property that curvatures \eqref{curvature of the underlying classical pp-space} and \eqref{curvature generated by a torsion wave} add up was also present in the case of generalised pp-waves with purely tensor torsion, see Refs.~\cite{pasic2014pp,pasic2005pp}.
In our previous paper  Ref.~\cite{pasic2014pp}, we gave the physical interpretation of generalised pp-waves with purely tensor torsion constructed in Ref.~\cite{pasic2005pp}. Similarly to the approach of Ref.~\cite{pasic2014pp}, now we want to  compare the generalised pp-waves with purely axial torsion to the solutions of the
classical models describing the interaction of gravitational and massless neutrino
fields, namely Einstein--Weyl theory.

Our torsion and torsion generated curvature can be interpreted as waves traveling at speed of light. The underlying classical pp-space of parallel
Ricci curvature can then be viewed as the gravitational imprint
created by a wave of some massless matter field. As pointed out in  Ref.~\cite{pasic2014pp}, such a situation
occurs in Einstein--Weyl theory. We choose to deal with the complexified curvature
$$
\mathfrak{R}:=r\,(l\wedge m)\otimes(l\wedge \overline{m}),
$$
where $r:=\frac14 k^2- \frac{\mathrm{i}}{2} k'$. Note that the function  $r$ is a function of the phase
$\varphi$ and  the curvature \eqref{curvature generated by a torsion wave} generated by the axial torsion is equal to $\textrm{Re}(\mathfrak{R})$.  The  curvature  $\mathfrak{R}$ is polarized, i.e. ${}^{*}\mathfrak{R}=-\mathfrak{R}^{*}=\pm\textrm{i}\mathfrak{R}$, and it can be written as
\begin{equation}
\label{spinor representation of curvature}
\mathfrak{R}_{\alpha\beta\gamma\delta} =\sigma_{\alpha\beta
ab}\,\omega^{abcd}\,\overline{\sigma}_{\gamma\delta cd},
\end{equation}
where $\omega$ is some symmetric rank 4 spinor and $\sigma_{\alpha\beta}$ are second order Pauli matrices defined by
$\sigma_{\alpha\beta ac}:=\frac12
\bigl( \sigma_{\alpha a\dot b}\epsilon^{\dot b\dot d}\sigma_{\beta
c\dot d} - \sigma_{\beta a\dot b}\epsilon^{\dot b\dot
d}\sigma_{\alpha c\dot d} \bigr)\,$ and $\overline{\sigma}$ denotes their complex conjugation.
Resolving (\ref{spinor
representation of curvature}) with respect to~$\omega$ yields
\begin{equation}
\label{formula for omega} \omega=\xi\otimes\xi\otimes\xi\otimes\xi,
\end{equation}
where
\begin{equation}
\label{formula for xi} \xi:=r^{1/4}\,\chi
\end{equation}
and $\chi^a=\left(1,\ 0\right)$ is the parallel spinor field of the underlying pp-space. Formula (\ref{formula for omega}) shows that the rank 4 spinor
$\omega$  is the 4th tensor
power of a rank 1 spinor $\xi$. Hence, the curvature $\mathfrak{R}$ is completely determined by
the rank 1 spinor field $\xi$.
\begin{lemma}\label{LemmaSpinorSatisfiesmasslessDirac}
The spinor field (\ref{formula for xi}) satisfies
 the massless Dirac equation.
\end{lemma}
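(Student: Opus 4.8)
The plan is to verify the massless Dirac (Weyl) equation $\sigma^{\mu}{}_{a\dot b}\{\!\nabla\!\}_{\mu}\xi^{a}=0$ directly, where $\{\!\nabla\!\}$ is the Levi--Civita connection of the pp-metric \eqref{explicit pp metric}, exploiting the fact that $\xi$ differs from the parallel spinor $\chi$ only by a scalar factor depending on the phase $\varphi$ alone. First I would apply the Leibniz rule: since $\{\!\nabla\!\}_{\mu}\chi^{a}=0$ on the underlying pp-space,
\begin{equation*}
\{\!\nabla\!\}_{\mu}\xi^{a}=\{\!\nabla\!\}_{\mu}\bigl(r^{1/4}\chi^{a}\bigr)=\bigl(\partial_{\mu}r^{1/4}\bigr)\,\chi^{a},
\end{equation*}
so the whole covariant derivative of $\xi$ is proportional to $\chi$. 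Next, since $r=r(\varphi)$ and $\partial_{\mu}\varphi=l_{\mu}$ (recall $\varphi(x)=\int l\cdot\rmd x$, so that in the coordinates of \eqref{explicit pp metric} one has $l_{\mu}=(0,0,0,1)=\partial_{\mu}x^{3}$), the chain rule gives $\partial_{\mu}r^{1/4}=\frac{\rmd r^{1/4}}{\rmd\varphi}\,l_{\mu}$, hence
\begin{equation*}
\{\!\nabla\!\}_{\mu}\xi^{a}=\frac{\rmd r^{1/4}}{\rmd\varphi}\;l_{\mu}\,\chi^{a}.
\end{equation*}

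Substituting this into the Dirac operator reduces the lemma to the purely algebraic identity $l_{\mu}\,\sigma^{\mu}{}_{a\dot b}\,\chi^{a}=0$, i.e. to the statement that the parallel null vector annihilates its own spinor. I would establish this in two-component spinor language: from the explicit frame \eqref{explicit l and a} one checks that the spinor equivalent of $l$ is $l_{a\dot b}=\chi_{a}\bar\chi_{\dot b}$ --- this is the defining relation between the parallel null vector and the parallel spinor of a pp-wave, consistent with the factorisation \eqref{spinor representation of curvature}--\eqref{formula for xi}. Contracting with $\chi^{a}$ and using the antisymmetry of the spinor metric $\epsilon$, which forces $\chi_{a}\chi^{a}=0$, gives $l_{a\dot b}\chi^{a}=\bigl(\chi_{a}\chi^{a}\bigr)\bar\chi_{\dot b}=0$, which is exactly $l_{\mu}\sigma^{\mu}{}_{a\dot b}\chi^{a}=0$. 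Chaining the three displays proves the lemma.

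The computation is routine; the points that need care are the spinor conventions --- the relation between the first Pauli matrices $\sigma_{\alpha}$ and the second Pauli matrices $\sigma_{\alpha\beta}$ of \eqref{spinor representation of curvature}, the sign in $l_{a\dot b}=\pm\chi_{a}\bar\chi_{\dot b}$ induced by the choice $m^{\mu}=(0,1,\mp\rmi,0)$, and the remark that $r$ enters only through $r^{1/4}$, so that, the Weyl equation being linear in $\xi$, the branch of the fourth root is immaterial wherever $r\neq0$. I do not anticipate a genuine obstacle: once $l_{a\dot b}=\chi_{a}\bar\chi_{\dot b}$ is set up consistently, the vanishing is immediate, essentially because $\{\!\nabla\!\}\xi$ takes values in $l\otimes\chi$ and $l$ is the repeated principal null direction determined by $\chi$. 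The same argument survives if one prefers a spinor connection incorporating the axial torsion $A=k\,l$: the extra term is again proportional to $l$ and drops out for the same reason.
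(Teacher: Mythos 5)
Your proof is correct and follows essentially the same route as the paper's: the Leibniz rule together with the parallelism of $\chi$ and the chain rule in the phase $\varphi$ reduces everything to the identity $l_\mu\sigma^\mu{}_{a\dot b}\chi^a=0$, and the purely axial torsion contributes a term again proportional to $l_\mu\sigma^\mu{}_{a\dot b}\chi^a$, so it drops out exactly as you remark. The only cosmetic difference is that the paper checks $l_\mu\sigma^\mu{}_{a\dot b}\chi^a=0$ by direct computation with the explicit Pauli matrices \eqref{Pauli matrices for pp metric} in the coordinates \eqref{explicit pp metric}, \eqref{explicit l and a}, whereas you use the frame-independent factorisation $l_{a\dot b}\propto\chi_a\overline\chi_{\dot b}$ and $\chi_a\chi^a=0$.
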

\begin{proof}
The massless Dirac equation, see Appendix B of Ref.~\cite{pasic2014pp}, is explicitly given by
$$
\sigma^\mu{}_{a\dot{b}}\nabla_\mu\xi^a-\frac12T^\eta{}_{\eta\mu}\sigma^\mu{}_{a\dot{b}}\xi^a=0
$$
which can be equivalently written as
\begin{equation}\label{massless Dirac Equation temp1}
\sigma^\mu{}_{a\dot{b}}\{\nabla\}_\mu\xi^a
\pm\frac{\mathrm{i}}{4}\varepsilon_{\alpha\beta\gamma\delta}T^{\alpha\beta\gamma}\sigma^\delta{}_{a\dot{b}}\xi^a=0,
\end{equation}
where $\{\nabla\}$ is the covariant derivative with respect to the Levi-Civita connection. For  the definition of the covariant derivative of a spinor field see Appendix A of Ref.~\cite{pasic2014pp} and Section~2.5 of Ref.~\cite{pasic2009new}. Since the classical pp-wave spacetime admits a parallel spinor field $\chi$ we have that  $\sigma^\mu{}_{a\dot{b}}\{\nabla\}_\mu\xi^a=(r^{1/4})' \sigma^\mu{}_{a\dot{b}}l_\mu \chi^a=0$ for the special local coordinates \eqref{explicit pp metric}, \eqref{explicit l and a} and for the  Pauli matrices for the pp-metric
\begin{equation}\label{Pauli matrices for pp metric}
\sigma^0{}_{a\dot{b}}=\left(\!
                         \begin{array}{cc}
                           1 & 0 \\
                           0 & -f \\
                         \end{array}\!
                       \right),
\sigma^1{}_{a\dot{b}}=\left(\!
                         \begin{array}{c}
                           0 \  1 \\
                           1 \  0 \\
                         \end{array}\!
                       \right),
\sigma^2{}_{a\dot{b}}=\left(\!
                         \begin{array}{cc}
                           0 & \mp \mathrm{i} \\
                           \pm \mathrm{i} & 0 \\
                         \end{array}\!
                       \right),
\sigma^3{}_{a\dot{b}}=\left(\!
                         \begin{array}{c}
                           0 \  0 \\
                           0 \  2 \\
                         \end{array}\!
                       \right).
\end{equation}
Also, for the  Pauli matrices \eqref{Pauli matrices for pp metric} and special local coordinates \eqref{explicit pp metric}, \eqref{explicit l and a}, we have that
\begin{align*}
\varepsilon_{\alpha\beta\gamma\delta}T^{\alpha\beta\gamma}\sigma^\delta{}_{a\dot{b}}\xi^a&=
\varepsilon_{\alpha\beta\gamma\delta}l_{\mu}k(\varphi)\varepsilon^{\mu\alpha\beta\gamma}
\sigma^\delta{}_{a\dot{b}}\xi^a
=6k(\varphi)l_\mu \sigma^\mu{}_{a\dot{b}}\xi^a=0,
\end{align*}
i.e. the massless Dirac equation \eqref{massless Dirac Equation temp1} is satisfied.
\end{proof}
\section{Einstein--Weyl Field Equations}\label{EinsteinWeylFieldEquations}
We consider the action as
\begin{equation}\label{EWaction}
S_{EW}:=2i\int \left( \xi^a\,\sigma^\mu{}_{a\dot
b}\,(\{\!\nabla\!\}_\mu\overline\xi^{\dot b}) \ -\
(\{\!\nabla\!\}_\mu\xi^a)\,\sigma^\mu{}_{a\dot
b}\,\overline\xi^{\dot b} \right) + K \int \mathcal{R},
\end{equation}
with the constant $\displaystyle K={c^4}/{16\pi G}$.
In Einstein--Weyl theory the connection is assumed to be
Levi-Civita, so we obtain the  Einstein--Weyl field equations varying the action (\ref{EWaction}) with
respect to the metric and the spinor, i.e.
\begin{eqnarray}\label{firstEL}
\partial S_{EW}/\partial g&=&0,\\
\label{secondEL}
\partial S_{EW}/\partial \xi&=&0.
\end{eqnarray}
The massless Dirac equation is obtained by varying the action (\ref{EWaction}) with respect to
the spinor. The variation of the first term of the action (\ref{EWaction}) with
respect to the metric yields the energy momentum tensor. For the detailed derivation of formula for the energy momentum tensor see Appendix B of Ref.~\cite{pasic2014pp}.
The explicit representation of the Einstein--Weyl field
equations (\ref{firstEL}), (\ref{secondEL}) is
\begin{align}
\frac{i}{2}\!\left[ \sigma^{\nu}{}_{a\dot b}
\left(\overline\xi^{\dot b}
\{\!\nabla\!\}^\mu\xi^a-\xi^a\{\!\nabla\!\}^\mu\overline\xi^{\dot b}
\right) \!+\!\sigma^{\mu}{}_{a\dot b} \left(\overline\xi^{\dot b}
\{\!\nabla\!\}^\nu\xi^a-\xi^a\{\!\nabla\!\}^\nu\overline\xi^{\dot b}
\right) \right]\quad\quad&  \nonumber
\\
\label{EWexplicit1} +i\left( \xi^a\,\sigma^\eta{}_{a\dot
b}\,(\{\!\nabla\!\}_\eta\overline\xi^{\dot b})g^{\mu\nu} -
(\{\!\nabla\!\}_\eta\xi^a)\,\sigma^\eta{}_{a\dot
b}\,\overline\xi^{\dot b}g^{\mu\nu}\right)
- K Ric^{\mu\nu}+\frac{K}{2}\mathcal{R}g^{\mu\nu}&=0,\\
\label{EWexplicit2} \sigma^\mu{}_{a\dot b}\{\! \nabla\! \}_\mu\,\xi^a
&=0.
\end{align}
\section{Comparison of Metric-affine and  Einstein--Weyl Solutions}
The examination of the Einstein--Weyl field equations has a long history, see Refs.~\cite{griffiths1970two,griffiths1970tetrad,audretsch1970neutrino,audretsch1971asymptotic,griffiths1972some,griffiths1972gravitational,collinson1973space,davis1974ghost,kuchowicz1978presence}. One review of known solutions of Einstein--Weyl theory is given in Ref.~\cite{pasic2014pp}. The
nonlinear system of equations (\ref{EWexplicit1}),
(\ref{EWexplicit2}) has solutions in the form of pp-waves.
 We wish to present a
class of explicit solutions of (\ref{EWexplicit1}),
(\ref{EWexplicit2}) where the metric $g$ is in the form of the
pp-metric and the spinor $\xi$ as in (\ref{formula for xi}).
The spinor (\ref{formula for xi}) satisfies the massless Dirac equation
(\ref{EWexplicit2}). The scalar curvature is zero
and as the spinor $\chi$ appearing in formula (\ref{formula
for xi}) is parallel, hence the equation (\ref{EWexplicit1}) now becomes
\begin{equation*}
\frac{i}{2} \sigma^{\nu}{}_{a\dot b} \left(\overline\xi^{\dot
b}\{\!\nabla\!\}^\mu\xi^a-\xi^a\{\!\nabla\!\}^\mu\overline\xi^{\dot
b}\right) +\frac{i}{2}\sigma^{\mu}{}_{a\dot b}
\left(\overline\xi^{\dot
b}\{\!\nabla\!\}^\nu\xi^a-\xi^a\{\!\nabla\!\}^\nu\overline\xi^{\dot
b}\right) - K Ric^{\mu\nu}  =  0.
\end{equation*}
We now need to determine under which conditions the above equation is satisfied.
Substituting formulae (\ref{RicEW}), (\ref{formula for xi}) into the above
equation, and using  $\nabla\chi=0$, we get that
\[
i (\sigma^{\nu}{}_{a\dot b}l^\mu +\sigma^{\mu}{}_{a\dot b}l^\nu)\!\!
\left((r^{1/4})'\ \overline{r^{1/4}}-r^{1/4}\
(\overline{r^{1/4}})'\right)\chi^a\overline{\chi}^{\dot b}=
K l^\mu l^\nu \!\left(f_{11}+f_{22}-k(x^3)^2\right).
\]
The condition that a pp-wave needs to satisfy to be a solution of Einstein--Weyl is
\begin{equation}\label{condition on s}
f_{11}+f_{22} =
 k(x^3)^2+
\frac{2i}{K}\left((r^{1/4})'\
\overline{r^{1/4}}-r^{1/4}\ (\overline{r^{1/4}})'\right),
\end{equation}
since $\sigma^\mu{}_{a\dot b}\chi^a\overline{\chi}^{\dot b}=l^\mu$.
The complex valued function $r(\varphi)$  can
be chosen arbitrarily and for the fixed function $k(\varphi)$ it uniquely determines the RHS of
(\ref{condition on s}). The main difference between the two models is that in the metric-affine model the generalised pp-wave solutions have parallel $\{\!Ric\!\}$ curvature, whereas in the Einstein--Weyl model the pp-wave type solutions do not necessarily have parallel Ricci curvature.
The comparison of this two types of solutions becomes much clearer in the case of the monochromatic solutions as was done in Ref.~\cite{pasic2014pp}.
In the metric-affine case the Laplacian of $f$ can be any constant,
while in the Einstein--Weyl case it is required for it to be a particular
constant, which is the consequence of conformal invariance of the metric-affine model and the presence of the gravitational constant in the Einstein--Weyl.
The generalised pp-waves of parallel Ricci curvature are very
similar to pp-type solutions of the Einstein--Weyl model. According this conclusion, similarly to Ref.~\cite{pasic2014pp}, we propose that generalised pp-waves with purely axial torsion and parallel Ricci curvature represent a metric-affine model for the massless neutrino.

\section*{Acknowledgments}

We thank the organisers Remo Ruffini, Robert Jantzen and Massimo Bianchi for awarding us grants in order to attend the 14th Marcel Grossmann meeting.


\label{lastpage}

\end{document}